\newtheorem{theorem}{Theorem}
\newcommand\blfootnote[1]{%
  \begingroup
  \renewcommand\thefootnote{}\footnote{#1}%
  \addtocounter{footnote}{-1}%
  \endgroup
}
\newcommand{\E}{\mathbb{E}}
\newcommand{\ind}[1]{\mathbbm{1}\!\left\{#1\right\}}
\newcommand{\cF}{\mathcal{F}}
\newcommand{\hI}{\hat{I}}
\newcommand{\df}[1]{\textnormal{\,d} \tau}
\newcommand{\bI}{\overline{I}}
\newcommand{\ba}{\overline{\alpha}}
\date{}
\begin{document}
\title{Change Rate Estimation and Optimal Freshness in Web Page Crawling\blfootnote{This paper has been accepted to the 13th EAI International Conference on Performance Evaluation Methodologies and Tools, VALUETOOLS'20, May 18--20, 2020, Tsukuba, Japan.
This is the author version of the paper.}}

\author[1]{Konstantin Avrachenkov }
\author[1]{Kishor Patil}
\author[2]{Gugan Thoppe}
\affil[1]{INRIA Sophia Antipolis, France 06902\footnote{email: k.avrachenkov@inria.fr, kishor.patil@inria.fr, gugan.thoppe@gmail.com}}
\affil[2]{Indian Institute of Science, Bengaluru, India 560012}
\maketitle

\begin{abstract}
For providing quick and accurate results, a search engine maintains a local snapshot of the entire web. And, to keep this local cache fresh, it employs a crawler for tracking changes across various web pages. However, finite bandwidth availability and server restrictions impose some constraints on the crawling frequency. Consequently, the ideal crawling rates are the ones that maximise the freshness of the local cache and also respect the above constraints.

Azar et al. \cite{Azar2018} recently proposed a tractable algorithm to solve this optimisation problem.  However, they assume the knowledge of the exact page change rates, which is unrealistic in practice. We address this issue here. Specifically, we provide two novel schemes for online estimation of page change rates. Both schemes only need partial information about the page change process, i.e., they only need to know if the page has changed or not since the last crawled instance. For both these schemes, we prove convergence and, also, derive their convergence rates. Finally, we provide some numerical experiments to compare the performance of our proposed estimators with the existing ones (e.g., MLE).
\end{abstract}


\section{Introduction}
The world wide web is gigantic: it has a lot of interconnected information and both the information and the connections keep changing. However, irrespective of the challenges arising out of  this, a user always expects a search engine to instantaneously provide accurate and up-to-date results. A search engine deals with this by maintaining a local cache of all the useful web pages and their links. As the freshness of this cache determines the quality of the search results, the search engine regularly updates it by employing a crawler (also referred to as a web spider or a web robot). The job of a crawler is (a) to access various web pages at certain frequencies so as to determine if any changes have happened to the content since the last crawled instance and (b) to update the local cache whenever there is a change. To understand the detailed working of crawlers, see \cite{heydon1999mercator, castillo2005effective, Kumar2016, Olston2010, Edwards2001}.

In general, a crawler has two constraints on how often it can access a page. The first one is due to limitations on the available bandwidth. The second one---also known as the politiness constraint---arises  when a server imposes limits on the crawl frequency. The latter implies that the crawler can not access pages on that server too often in a short amount of time. Such constraints cannot be ignored, since otherwise the server may forbid the crawler from all future accesses.

In summary, to identify the ideal rates for crawling different web pages, a search engine needs to solve the following optimisation problem: Maximise the freshness of the local database subject to constraints on the crawling frequency.

In the early variants of this problem, the freshness of each page was assumed to be  equally important \cite{cho2000synchronizing, Edwards2001}. In such cases, experimental evidence surprisingly shows that the uniform policy---crawl all pages at the same frequency irrespective of their change rates---is more or less the optimal crawling strategy.

Starting from the  pioneering work in \cite{cho2003effective}, however, the freshness definition was modified to include different weights for different pages depending on their importance, e.g., represented as the frequency of requests for the pages. The motivation for this change was the fact that only a finite number of pages can be crawled in any given time frame. Hence, to improve the utility of the local database, important pages should be kept as fresh as possible. Not surprisingly, under this new definition, the optimal crawling policy does indeed depend on the page change rates. This was numerically demonstrated first in  \cite{cho2003effective} for a setup with a small number of pages. A more rigorous derivation of this fact was recently given in the path breaking paper \cite{Azar2018} by Azar et al. In fact, this work also provides a near-linear time algorithm to find a near-optimal solution.

A separate study  \cite{avrachenkov2016whittle,nino2014dynamic} provides a Whittle index based dynamic programming approach to optimise the schedule of a web crawler. In that context, the page/catalogue freshness estimate also influences the optimal crawling policy and its good estimation is needed.

Our work is mainly motivated by the work from Azar et al. \cite{Azar2018}. In particular, input to their algorithm is the actual page change rates. However, in practice, these values are not known in advance and, instead, have to be estimated. This is the issue that we address in this paper.

Our main contributions can be summarised as follows. First, we propose two novel approaches for online estimation of the actual page change rates. The first is based on the Law of Large Numbers (LLN), while the second is derived using the Stochastic Approximation (SA) principles. Second, we theoretically show that both these estimators almost surely (a.s.) converge to the actual change rate values, i.e.,  both our estimators are asymptotically consistent. Furthermore, we also derive their convergence rates in the expected error sense. Finally, we provide some simulation results to compare the performance of our online schemes to each other and also to that of the (offline) MLE estimator. Alongside, we also show how our estimates can be combined with the algorithm in \cite{Azar2018} to obtain near-optimal crawling rates.

The rest of this paper is organised as follows. The next section provides a formal summary of this work in terms of the setup, goals, and key contributions. It also gives the explicit update rules for our two online schemes. In Section \ref{sec3}, we discuss their convergence and converge rates and also provide the formal analysis for the same. The numerical experiments discussed above are given in Section~\ref{sec5}. We conclude in Section~\ref{sec6} with some future directions.

\section{Setup, Goal, and Key Contributions}
\label{sec2}
The three topics are individually described below. \\[-1ex]

\noindent \textbf{Setup}: We assume the following. The local cache consists of copies of $N$ pages and $w_i$ denotes the  importance of the $i-$th page. Further, each page changes independently and the actual times at which page $i$ changes is a homogeneous Poisson point process in $[0, \infty)$ with a constant but unknown rate $\Delta_i.$ Independent of everything else, page $i$ is crawled (accessed) at the random instances $\{t_{k}\}_{k\geq 0} \subset [0, \infty),$ where $t_{0} = 0$ and the inter-arrival times, i.e., $\{t_{k} - t_{k - 1}\}_{k \geq 1},$ are iid exponential random variables with a known rate $p_i.$ Thus, the times at which page $i$ is crawled is also a Poisson point process but with rate $p_i.$ At time instance $t_{k},$ we get to know if page $i$ got modified or not in the interval $(t_{k - 1}, t_{k}],$ i.e., we can access the value of the indicator
\begin{equation*}
I_{k} := \begin{cases*}
                1, & if page i got modified in $(t_{k-1},t_{k}],$ \\
                0, & otherwise.
            \end{cases*}
\end{equation*}

We emphasise that each page is crawled independently. In other words, the notations $\{t_k\}$ and $\{I_k\}$ defined above do depend on $i.$ However, we hide this dependence for the sake of notational simplicity. We shall follow this practice for the other notations as well; the dependence on $i$ should be clear from the context.

Although the above assumptions are standard in the crawling literature, nevertheless, we now provide a quick justification for the same. Our assumption that the page change process is a Poisson point process is based on the experiments reported in \cite{brewington2000dynamic, brewington2000keeping, cho2000evolution}. Some generalised models for the page change process  have also been considered in the literature \cite{matloff2005, singh2007}; however, we do not pursue these ideas here. Separately, our assumption on $\{I_k\}$ is based on the fact that a crawler can only access incomplete knowledge about the page change process. In particular, a crawler does not know when and how many times a page has changed between two crawling instances. Instead, all it can track is the status of a page at each crawling instance and know if it has changed or not with respect to the previous access.  Sometimes, it is possible to also know the time at which the page was last modified \cite{castillo2005effective, cho2003estimating}, but we do not consider this case here. \\[-1ex]

\noindent \textbf{Goal}: Develop online algorithms for estimating $\Delta_i$ in the above setup. Subsequently, find optimal crawling rates $\{p_i^*\}$ so that the overall freshness of the local cache defined by
\begin{equation}\label{Fresh}
    \E\bigg[\dfrac{1}{T} \int\displaylimits_{0}^{T}\bigg( \sum_{i = 1}^{N}w_i \ind{\text{Fresh}(i,t)}\bigg) dt\bigg]
\end{equation}
is maximised subject to $\sum_{i = 1}^N p_i \leq B.$  Here,
$T > 0$ is some finite horizon, $B \geq 0$ is a bound on the overall crawling frequency, $\ind{}$ is the indicator, and \text{Fresh$(i, t)$} is the event that page $i$ is fresh at time $t,$ i.e., the local copy matches the actual page. \\[-1ex]

\noindent \textbf{Key Contributions}:
We present two online methods for estimating $\Delta_i,$ the first based on the LLN and the second based on SA.  If $\{x_k\}$ and $\{y_k\}$ denote the  iterates of these two methods, then their update rules are as shown below.
\begin{itemize}
    \item \emph{LLN Estimator}: For $k \geq 1,$
    \begin{equation}
        \label{eqn:LLN_Est}
        x_{k} = p_i \hI_k/(k + \alpha_k - \hI_k).
    \end{equation}
    Here, $\hI_k = \sum_{j = 1}^k I_j;$ hence, $\hI_k = \hI_{k - 1} + I_k.$ And, $\{\alpha_k\}$ is any positive sequence satisfying the conditions in Theorem~\ref{thm:LLN_Est}; e.g.,  $\{\alpha_k\}$ could be $\{1\},$ $\{\log k\},$ or $\{\sqrt{k}\}.$ \\[-1ex]

    \item \emph{SA Estimator}: For $k \geq 0$ and some initial value $y_0,$
    \begin{equation}
        \label{eqn:SA_Est}
        y_{k + 1} = y_k + \eta_k[I_{k + 1}(y_k + p_i) - y_k].
    \end{equation}
    Here, $\{\eta_k\}$ is any stepsize sequence that satisfies the conditions in Theorem~\ref{thm:SA_Est}. For example, $\{\eta_k\}$ could be $\{1/(k + 1)^\gamma\}$ for some $\gamma \in (0, 1].$
\end{itemize}

We call these methods online because the estimates can be updated on the fly as and when a new observation $I_k$ becomes available. This contrasts the MLE estimator in which one needs to start the calculation from scratch each time a new data point  arrives. Also, unlike MLE, our estimators are never unstable. See Section~\ref{subsec:Comp} for the complete details on this.

Our main results include the following. We show that both $\{x_k\}$ and $\{y_k\}$ converge to $\Delta_i$ a.s. Further, we show that
\begin{enumerate}
    \item $\E\|x_k - \Delta_i\| = O\left(\max\left\{k^{-1/2}, \alpha_k/k\right\}\right),$ and

    \item $\E\|y_k - \Delta_i\| = O(k^{-\gamma/2})$ if $\eta_k = (k + 1)^\gamma$ with $\gamma \in (0, 1).$
\end{enumerate}

Finally, we provide three numerical experiments for judging the strength of our two estimators. In the first one, we compare the performance of our estimators to each other and also to that of the Naive estimator and the MLE estimator described in \cite{cho2003estimating}. In the second one, we combine our estimates with the algorithm in \cite{Azar2018} and compute the optimal crawling rates. Subsequently, we use this to measure the overall freshness of the local cache. In the last and final experiment, we look at the behaviour of our estimators for different choices of the sequences $\{\alpha_k\}$ and $\{\eta_k\}.$

\section{Change rate estimation} \label{sec3}
Here, we provide a formal convergence and convergence rate analysis for our two estimators. Thereafter, we compare their behaviours to that of the estimators that already exist in the literature---the Naive estimator, the MLE estimator, and the Moment Matching (MM) estimator.

\subsection{LLN Estimator}
\label{subsecLLN_Est}

Our first aim here is to obtain a formula for $\E[I_1].$ We shall use this later to motivate the form of our LLN estimator.

Let $\tau_1 = t_1 - t_0 = t_1.$ Then, as per our assumptions in Section~\ref{sec2}, $\tau_1$ is an exponential random variable with rate $p_i.$ Also, $\E[I_1 | \tau_1 = \tau ] = 1- \exp{(-\Delta_i \tau)}.$
These two facts put together show that

\begin{equation}
\label{eqn:Exp_Ind}
   \E\big[I_1\big] = \Delta_i/(\Delta_i + p_i).
\end{equation}
This gives the desired formula for $\E[I_1].$

From this last calculation, we have
\begin{equation}
\label{e:Actual_Delta_i_Formula}
\Delta_i = p_i \E[I_1]/(1 -\E[I_1])
\end{equation}
Separately, because $\{I_k\}$ is an iid sequence and $\E|I_1| \leq 1$ , it follows from the strong law of large numbers that $\E\big[I_1\big] = \lim_{k \to \infty}  \sum_{j = 1}^k I_j /k \; \text{a.s.}$
Thus,
\[
\Delta_i = p_i \frac{\lim_{k \to \infty}\sum_{j = 1}^k I_j/k}{1 - \lim_{k \to \infty}\sum_{j = 1}^k I_j/k} \quad  \text{a.s.}
\]
Consequently, a natural estimator for  $\Delta_i$ is
\begin{equation}
    x_k' = p_i \frac{\sum_{j = 1}^k I_j/k}{1 - \sum_{j = 1}^kI_j/k} = p_i \frac{\hI_k}{k - \hI_k},
\end{equation}
where $\hI_k$ is as defined below \eqref{eqn:LLN_Est}.

Unfortunately, the above estimator faces an instability issue, i.e., $x'_k = \infty$ when $I_1, \ldots, I_k$ are all $1.$ To fix this, one can add a non-zero term in the denominator. The different choices then gives rise to the LLN estimator defined in \eqref{eqn:LLN_Est}.

The following result discusses the convergence and convergence rate of this estimator.

\begin{theorem}
\label{thm:LLN_Est}
Consider the estimator given in \eqref{eqn:LLN_Est} for some positive sequence $\{\alpha_k\}.$
\begin{enumerate}
    \item If \; $\lim_{k \to \infty} \alpha_k/k= 0,$ then  $\lim_{k \to \infty} x_k = \Delta_i \; \text{a.s.}$

    \item Additionally, if \; $\lim_{k \to \infty} \log(k/\alpha_k)/k = 0,$ then
    \[
    \E |x_k - \Delta_i| = O\left(\max\left\{k^{-1/2}, \alpha_k/k\right\}\right).
    \]
\end{enumerate}
\end{theorem}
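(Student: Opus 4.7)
For part (1), the plan is to reduce everything to the strong law of large numbers. Since $\{I_k\}$ is iid Bernoulli with mean $\mu := \E[I_1] = \Delta_i/(\Delta_i + p_i)$ by \eqref{eqn:Exp_Ind}, the SLLN gives $\hI_k/k \to \mu$ almost surely. Rewriting
\begin{equation*}
x_k = \frac{p_i\, \hI_k/k}{1 + \alpha_k/k - \hI_k/k}
\end{equation*}
and using $\alpha_k/k \to 0$ together with the continuous mapping theorem yields $x_k \to p_i \mu/(1-\mu) = \Delta_i$ a.s., matching \eqref{e:Actual_Delta_i_Formula}.

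For part (2), the key algebraic step would be to rewrite
\begin{equation*}
x_k - \Delta_i = \frac{(p_i + \Delta_i)(\hI_k - \mu k) - \Delta_i\, \alpha_k}{k + \alpha_k - \hI_k},
\end{equation*}
which follows from the identity $\Delta_i = (p_i + \Delta_i)\mu$. This cleanly separates the statistical noise of $\hI_k$ around its mean $\mu k$ from the deterministic bias contributed by the regulariser $\alpha_k$. The denominator, however, is delicate: it can be as small as $\alpha_k$ when $\hI_k$ is close to $k$, so a uniform bound fails and one must split on a concentration event.

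Concretely, I would fix $\epsilon \in (0, 1-\mu)$ and set $A_k = \{\hI_k/k \leq \mu + \epsilon\}$. On $A_k$, the denominator is at least $(1-\mu-\epsilon)\,k$, and the triangle inequality gives
\begin{equation*}
|x_k - \Delta_i|\,\mathbbm{1}_{A_k} \leq \frac{(p_i + \Delta_i)\,|\hI_k - \mu k|}{(1-\mu-\epsilon)\,k} + \frac{\Delta_i\,\alpha_k}{(1-\mu-\epsilon)\,k}.
\end{equation*}
Taking expectation and using $\E|\hI_k - \mu k| \leq \sqrt{k\mu(1-\mu)}$ (Cauchy--Schwarz plus the Bernoulli variance) produces the $O(k^{-1/2}) + O(\alpha_k/k)$ contribution. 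On $A_k^c$, the denominator is still bounded below by $\alpha_k$, so the crude estimate $|x_k - \Delta_i| \leq p_i k/\alpha_k + \Delta_i$ applies; Hoeffding's inequality gives $\mathbb{P}(A_k^c) \leq \exp(-2k\epsilon^2)$, and under the hypothesis $\log(k/\alpha_k)/k \to 0$ the product $(k/\alpha_k)\exp(-2k\epsilon^2)$ decays faster than any polynomial, hence is absorbed into the $O(k^{-1/2})$ term.

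The main obstacle is exactly this interplay between the collapsing denominator and the Hoeffding tail: $\alpha_k$ must grow fast enough that the exponential concentration of $\hI_k/k$ dominates the blow-up factor $k/\alpha_k$, and the extra assumption $\log(k/\alpha_k)/k \to 0$ is precisely what buys this. All remaining ingredients---the SLLN, the Bernoulli variance bound, and Hoeffding's inequality---are standard.
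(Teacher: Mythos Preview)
Your proof is correct and follows the same overall strategy as the paper---SLLN plus continuity for part~(1), and for part~(2) a split on a concentration event for $\hI_k/k$ so that the denominator is bounded below on the good event while an exponential tail bound kills the bad event. The execution differs in two places. First, the paper decomposes $x_k-\Delta_i$ by inserting the intermediate point $p_i\mu/(\ba_k+1-\mu)$, producing two terms $A_k,B_k$; your single-fraction identity
\[
x_k-\Delta_i=\frac{(p_i+\Delta_i)(\hI_k-\mu k)-\Delta_i\alpha_k}{k+\alpha_k-\hI_k}
\]
is cleaner and separates the noise and bias contributions in one step. Second, the paper uses a $k$-dependent threshold $\delta_k$ with $\delta_k^2=6\log(k/\alpha_k)/(k\mu)$ together with the multiplicative Chernoff bound, tuning $\delta_k$ so the tail contribution is exactly $O(\alpha_k/k)$; you instead take a fixed $\epsilon\in(0,1-\mu)$ and apply Hoeffding, which is more elementary and still sufficient because $(k/\alpha_k)e^{-2k\epsilon^2}=e^{o(k)-2k\epsilon^2}$ decays faster than any polynomial under the hypothesis $\log(k/\alpha_k)/k\to 0$. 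Both routes invoke that hypothesis for exactly the same purpose---to ensure the blow-up factor $k/\alpha_k$ is beaten by the exponential concentration---so neither is more general, but yours is slightly shorter.
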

\begin{proof}
Let $\mu = \E[I_1],$ $\bI_k = \hI_k/k,$ and $\ba_k = \alpha_k/k.$ Then, observe that \eqref{eqn:LLN_Est} can be rewritten as $x_k = p_i \bI_k/(1 + \ba_k - \bI_k).$
Now, $\lim_{k \to \infty} \bI_k = \mu$ a.s. and $\lim_{k \to \infty} \ba_k = 0.$ The first claim holds due to the strong law of large numbers, while the second one is true due to our assumption. Statement (1) is now easy to see.

We now derive Statement (2).  From \eqref{e:Actual_Delta_i_Formula}, we have
\[
|x_k - \Delta_i| =  \left|x_k - p_i \frac{\mu}{1 - \mu} \right| \leq p_i\left(|A_k| + |B_k|\right),
\]
where
\[
A_k = \frac{\bI_k}{\ba_k + 1 - \bI_k} -  \frac{\mu}{\ba_k + 1 - \mu} \]
and
\[
B_k = \frac{\mu}{\ba_k + 1 - \mu} - \frac{\mu}{1 - \mu}.
\]
Since $\alpha_k > 0$ and, hence, $\ba_k > 0,$
\[
|B_k| = \ba_k \frac{\mu}{(1 - \mu)(\ba_k + (1 - \mu))} \leq \ba_k \frac{\mu}{(1 - \mu)^2}.
\]
Similarly,
\[
|A_k| \leq \left(\frac{1 + \ba_k}{1 - \mu}\right) \left(\frac{|\bI_k - \mu|}{\ba_k + 1 - \bI_k}\right).
\]

Because we have assumed  $\ba_k \to 0,$ we get $\lim_{k \to \infty} \E[B_k] = 0.$  It remains to show $\lim_{k \to \infty} \E[A_k] = 0.$ Towards that, let $\{\delta_k\}$ be a positive sequence that we will pick later. Then,
\[
\E\left[\frac{|\bI_k - \mu|}{\ba_k + 1 - \bI_k}\right] \leq \E [C_k] + \E [D_k]
\]
where
\[
C_k = \frac{|\bI_k - \mu|}{\ba_k + 1 - \bI_k}\ind{\bI_k - \mu \leq \delta_k \mu}
\]
and
\[
D_k = \frac{|\bI_k - \mu|}{\ba_k + 1 - \bI_k}\ind{\bI_k - \mu \geq \delta_k \mu}.
\]
On the one hand,
\[
\E [C_k] \leq \frac{\E|\bI_k - \mu|}{\ba_k + 1 - (1 + \delta_k)\mu}  \leq \frac{\sqrt{\text{Var}[I_1]}}{\sqrt{k} (\ba_k + 1 - (1 + \delta_k) \mu)}.
\]
On the other hand, since  $|\bI_k - \mu| \leq 2$ and $1 - \bI_k \geq 0,$ it follows by applying the Chernoff bound that
\[
\E [D_k] \leq \frac{2}{\ba_k} \Pr \{\bI_k \geq (1 + \delta_k)\mu \} \leq \frac{2}{\ba_k} \exp\left(-k\delta_k^2  \mu/3\right).
\]

We now pick $\{\delta_k\}$ so that $\delta_k^2 = 6\log(1/\, \ba_k)/(k \mu)$ for all $k \geq 1.$ Then,  $\E [D_k] \leq 2 \ba_k.$ Now, due to our assumptions on $\{\alpha_k\},$ $\lim_{k\to \infty} \E [D_k] = 0.$ Similarly, $\lim_{k \to \infty}\delta_k = 0,$ whence it follows that $\lim_{k \to \infty} \E[C_k] = 0.$ These relations together then show that $\lim_{k \to \infty} \E[A_k] = 0.$

The desired result now follows.
\end{proof}

\subsection{SA Estimator}
Here, we use the theory of stochastic approximation to study the behaviour of our SA estimator.

\begin{theorem}
\label{thm:SA_Est}
Consider the estimator given in \eqref{eqn:SA_Est} for some positive stepsize sequence $\{\eta_k\}.$
\begin{enumerate}
    \item Suppose that $\sum_{k = 0}^\infty \eta_k = \infty$ and $\sum_{k = 0}^\infty \eta_k^2 < \infty.$ Then, $\lim_{k \to \infty} y_k = \Delta_i \, \text{a.s.}$

    \item Suppose that $\eta_k = 1/(k + 1)^\gamma$ with $\gamma \in (0, 1).$ Then,
    \[
        \E \|y_k - \Delta_i\| = O\left(k^{-\gamma/2}\right).
    \]
\end{enumerate}
\end{theorem}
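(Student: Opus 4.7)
The plan is to cast \eqref{eqn:SA_Est} in standard Robbins--Monro form and exploit the linearity of the mean-field dynamics. Recall from \eqref{eqn:Exp_Ind} that $\mu := \E[I_1] = \Delta_i/(\Delta_i + p_i)$, and note that the conditional drift is $h(y) := \E[I_{k+1}(y + p_i) - y \mid y_k = y] = (\mu - 1)y + p_i \mu$, which is linear and vanishes exactly at $y = \Delta_i$. Writing $e_k := y_k - \Delta_i$ and $M_{k+1} := I_{k+1}(y_k + p_i) - y_k - h(y_k)$, the recursion becomes
\[
e_{k+1} = \bigl(1 - (1-\mu)\eta_k\bigr)\, e_k + \eta_k M_{k+1},
\]
where $\{M_{k+1}\}$ is a martingale difference sequence with respect to the natural filtration $\{\cF_k\}$. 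Since $I_{k+1} \in \{0,1\}$, the conditional variance satisfies $\E[M_{k+1}^2 \mid \cF_k] = \mu(1-\mu)(y_k + p_i)^2 \leq C(1 + e_k^2)$ for some constant $C > 0$.

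For Part (1), I would square the recursion and take conditional expectations to obtain an inequality of the form
\[
\E[e_{k+1}^2 \mid \cF_k] \leq \bigl(1 - 2(1-\mu)\eta_k + C'\eta_k^2\bigr)\, e_k^2 + C'\eta_k^2
\]
for some $C' > 0$. Since $\sum_k \eta_k^2 < \infty$ and $\sum_k \eta_k = \infty$, the Robbins--Siegmund almost-supermartingale theorem yields that $e_k^2$ converges a.s.\ to a finite limit and that $\sum_k \eta_k e_k^2 < \infty$. Combined with $\sum_k \eta_k = \infty$, this forces $e_k \to 0$ a.s., giving convergence of $y_k$ to $\Delta_i$.

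For Part (2), with $\eta_k = (k+1)^{-\gamma}$ and $\gamma \in (0,1)$, I would take unconditional expectations in the same inequality to get a scalar recursion
\[
u_{k+1} \leq \bigl(1 - 2(1-\mu)\eta_k + C'\eta_k^2\bigr)\, u_k + C'\eta_k^2
\]
for $u_k := \E[e_k^2]$. A standard induction argument (Chung's lemma) then yields $u_k = O(\eta_k)$; the key ingredient is that $\eta_{k-1}/\eta_k \to 1$ for polynomial stepsizes, so the bound $u_k \leq K\eta_k$ is preserved under the recursion for $K$ large enough. Jensen's inequality then gives $\E|e_k| \leq \sqrt{u_k} = O(k^{-\gamma/2})$, which is the claim.

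The main obstacle is the bookkeeping in Part (2): one must verify that the linear-in-$\eta_k$ drift term $-2(1-\mu)\eta_k u_k$ dominates both the $O(\eta_k^2)$ correction on the coefficient of $u_k$ and the additive noise variance $C'\eta_k^2$ when $u_k$ is of order $\eta_k$. This is straightforward for $\gamma \in (0,1)$, since $\eta_k \to 0$ eventually makes the $O(\eta_k^2)$ terms negligible compared to $\eta_k$; the restriction to $\gamma < 1$ is precisely what ensures the standard induction step $u_{k+1} \leq K \eta_{k+1}$ can be closed.
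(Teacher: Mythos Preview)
Your argument is correct, and for Part~(2) it is essentially the paper's own: the paper also writes the error recursion $y_{k+1}-\Delta_i = (1-\lambda\eta_k)(y_k-\Delta_i)+\eta_k M_{k+1}$ with $\lambda = p_i/(\Delta_i+p_i)=1-\mu$, squares, uses $\E[M_{k+1}^2\mid\cF_k]\le C(1+y_k^2)$, and then appeals to a standard scalar-recursion lemma (in their case, \cite[Theorem~3.1]{dalal2018finite}; your Chung-type induction is the same mechanism).

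For Part~(1), however, you take a genuinely different route. The paper casts \eqref{eqn:SA_Est} as a stochastic approximation tracking the ODE $\dot y = h(y)$ and invokes the Borkar framework: it verifies Lipschitzness of $h$, exhibits $L(y)=(y-\Delta_i)^2/2$ as a global Lyapunov function, checks the scaling condition $h_c\to h_\infty$ with $h_\infty(y)=-\lambda y$, and concludes via the ODE-method stability theorem. Your approach bypasses all of this: because $h$ is \emph{linear}, you can write down the exact error recursion, square it, and apply Robbins--Siegmund directly to obtain both a.s.\ convergence of $e_k^2$ and $\sum_k \eta_k e_k^2<\infty$, whence $e_k\to 0$. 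This is more elementary and fully self-contained---no ODE comparison, no scaling-limit stability check---and it exploits the linearity that the paper's more general machinery does not need. The trade-off is that the paper's ODE argument would survive unchanged if $h$ were merely Lipschitz with a globally stable equilibrium, whereas your Lyapunov-by-squaring step leans on the explicit linear contraction $1-(1-\mu)\eta_k$.
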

\begin{proof}
For $k \geq 0,$ let $\cF_k :=  \sigma(\Delta^0, I_1, \ldots, I_k).$ Then, from \eqref{eqn:Exp_Ind} and the fact that $\{I_k\}$ is an iid sequence, we get
\[
\E[I_{k + 1}(y_k + p_i) - y_k |\cF_k] = \frac{\Delta_i}{\Delta_i + p_i}  (y_k + p_i) - y_k = h(y_k),
\]
where $h(y) = (\Delta_i - y)p_i/(\Delta_i + p_i).$ Hence, one can rewrite \eqref{eqn:SA_Est} as
\begin{equation}
\label{eqn:SAForm}
y_{k + 1} = y_k + \eta_k [h(y_k) + M_{k + 1}],
\end{equation}
where
\begin{align*}
    M_{k + 1} = {} & [I_{k+1}(y_k + p_i) - y_k] - h(y_k)\\
    = {} & \left[I_{k + 1} - \frac{\Delta_i}{\Delta_i + p_i}\right](y_k + p_i).
\end{align*}
Since $\E[M_{k + 1}|\cF_k] = 0$ for all $k \geq 0,$ $\{M_k\}$ is a martingale difference sequence. Consequently, \eqref{eqn:SAForm} is a classical SA algorithm whose limiting ODE is
\begin{equation}
\label{eqn:limODE}
    \dot{y}(t) = h(y).
\end{equation}

Now, Statement (1) follows from Corollary~4 and Theorem~7 in Chapters 2 and 3, respectively, of \cite{borkar2009stochastic}, provided we show that:
\begin{enumerate}[i.)]
    \item $h$ is a globally Lipschitz continuous function.

    \item $\Delta_i$ is an unique globally asymptotically stable equilibrium of \eqref{eqn:limODE}.

    \item $\sum_{k = 0}^\infty \eta_k = \infty$ and $\sum_{k = 0}^\infty \eta_k^2 < \infty.$

    \item $\{M_k\}$ is a martingale difference sequence with respect to the filtration $\{\cF_k\}.$ Further, there is a constant $C \geq 0$ such that $
        \E[M_{k + 1}^2|\cF_k] \leq C(1 + y_k^2) \quad \text{a.s.} $ for all $k \geq 0.$

    \item There exists a continuous function $h_\infty$ such that the functions $h_c(x) := h(cx)/c,$ $c \geq 1,$ satisfy $h_c(x) \to h_\infty(x)$ uniformly on compact sets as $c \to \infty.$

    \item The ODE $\dot{y}(t) = h_{\infty}(y)$ has origin as its unique globally asymptotically stable equilibrium.
\end{enumerate}

Since $h$ is linear, the Lipschitz continuity condition trivially holds. Separately, observe that $h(\Delta_i) = 0;$ this shows that $\Delta_i$ is an equilibrium point of \eqref{eqn:limODE}. Now,  $L(y) = (y - \Delta_i)^2/2$ is a Lyapunov function for \eqref{eqn:limODE} with respect to $\Delta_i.$ This is because $L(y) \geq 0,$ while $\nabla L(y) h(y) = -p(y - \Delta_i)^2/(p_i + \Delta_i) \leq 0;$ the equality holds in both these relations if and only if $y = \Delta_i.$ This shows that $\Delta_i$ is a unique globally asymptotically stable equilibrium of \eqref{eqn:limODE}, which  establishes Condition ii.).

Condition iii.) trivially holds due to our assumption about $\{\eta_k\}.$ Regarding the next condition, observe that $\{M_k\}$ is indeed a martingale difference sequence. Further,  $|M_{k + 1}| \leq |y_k| + p_i,$ whence it follows that Condition iv.) also holds.

Next, let $h_\infty(y) := -y p_i/(\Delta_i + p_i).$ Then, it is easy to see that Condition v.) trivially holds. Similarly, it is easy to see that Condition vi.) holds as well.

Statement (1) now follows, as desired.

We now sketch a proof for Statement (2). First, note that
\[
y_{k + 1} - \Delta_i = (1 - \lambda \eta_k) (y_k - \Delta_i) + \eta_k M_{k + 1},
\]
where $\lambda = p_i/(\Delta_i + p_i).$ Now, since $\E[M_{k + 1}|\cF_k] = 0,$
\[
\E[(y_{k + 1} - \Delta_i)^2|\cF_k] = (1 - \lambda \eta_k)^2(y_k - \Delta_i)^2 + \eta_k^2 \E[M_{k + 1}^2|\cF_k].
\]
Recall that $\E[M_{k + 1}^2 |\cF_k] \leq C(1 + y_k^2)$ for some constant $C \geq 0.$ Using this above and then repeating all the steps from the proof of  \cite[Theorem~3.1]{dalal2018finite} gives Statement (2), as desired.
\end{proof}

\subsection{Comparison with Existing Estimators}
\label{subsec:Comp}
As far as we know, there are three other approaches in the literature for estimating page change rates---the Naive estimator, the MLE estimator, and the MM estimator. The details about the first two estimators can be found in \cite{cho2003estimating} while, for the third one, one can look at \cite{upadhyay2019}. We now do a comparison, within the context of our setup, between these estimators and the ones that we have proposed.

The Naive estimator simply uses the average number of changes detected to approximate the rate at which a page changes. That is, if $\{z_k\}$ denote the values of the Naive estimator then, in our setup, $z_k = p_i\hI_k/k,$ where $\hI_k$ is as defined below in \eqref{eqn:LLN_Est}. The intuition behind this is the following. If  $\tau_1$ is as defined at the beginning of Section~\ref{subsecLLN_Est}, then observe that $\E[N(\tau_1)] = \Delta_i/p_i.$ Hence, the Naive estimator tries to approximate $\E[N(\tau_1)]$ with $\hI_k/k$ so that the previous relation can then be used for guessing the change rate.

Clearly, $ \E[z_k] = p_i \Delta_i/(\Delta_i + p_i) \neq \Delta_i.$ Also, from the strong law of large numbers, $z_k \overset{a.s.}{\to} p_i \Delta_i/(\Delta_i + p_i) \neq \Delta_i.$ Thus, this estimator is not consistent and is also biased. This is to be expected since this estimator does not account for all the changes that occur between two consecutive accesses.

Next, we look at the MLE estimator. Informally, this estimator identifies the parameter value that has the highest probability of producing the
observed set of observations.In our setup, the value of the MLE estimator is obtained by solving the following equation for $\Delta_i:$
\begin{equation}\label{LLE}
\sum_{j=1}^{k}  I_j\, \tau_j/(\exp{( \Delta_i\, \tau_j)} - 1) =  \sum_{j=1}^{k}  (1 - I_j)\, \tau_j,
\end{equation}
where $\tau_k = t_k - t_{k - 1}$ and $\{t_k\}$ is as defined in Section~\ref{sec2}.
The derivation of this relation is given in \cite[Appendix C]{cho2003estimating}. As mentioned in \cite[Section 4]{cho2003estimating}, the above estimator is consistent.

Note that the MLE estimator makes actual use of the inter-arrival crawl times $\{\tau_k\}$ unlike our two estimators and also the Naive estimator. In this sense, it fully accounts for the randomness in crawling intervals. And, as we shall see in the numerical section, the quality of the estimate obtained via MLE improves rapidly in comparison to the Naive estimator as the sample size increases.

However, MLE suffers in two aspects--- computational tractability and mathematical instability. Specifically, note that the MLE estimator lacks a closed form expression. Therefore, one has to solve \eqref{LLE} by using numerical methods such as the Newton–Raphson method, Fisher’s Scoring Method, etc. Unfortunately, using these ideas to solve \eqref{LLE} takes more and more time as the number of samples grow. Also note that, under the above solution ideas, the MLE estimator works in an offline fashion. In that, each time we get a new observation, \eqref{LLE} needs to be solved afresh. This is because there is no easy way to efficiently reuse the calculations from one iteration into the next. One reasonable alternative is to perform MLE estimation in a batch mode, i.e., wait until we gather a large number of samples and then apply one of the above-mentioned methods. However, even then the computation time will be long when $k$ is large.

Besides the complexity, the MLE estimator is also unstable in two situations. One, when no changes have been detected ($I_j = 0, \, \forall k \in \{1, \ldots, k\}$), and the other, when all the accesses detect a change ($I_j = 1, \, \forall k \in \{1, \ldots, k\}$). In the first setting, no solution exists; in the second setting, the solution is $\infty.$ One simple strategy to avoid these instability issues is to clip the estimate to some pre-defined range whenever one of bad observation instances occur.

Finally, we talk about the MM estimator. Here, one looks at the fraction of times no changes were detected during page accesses and, then, using a moment matching method tries to approximate the actual page change rate. In our context, the value of this estimator is obtained by solving $\sum_{j = 1}^k (1 - I_j) = \sum_{j = 1}^k e^{- \Delta_i \tau_j}$ for $\Delta_i.$ The details of this equation are given in \cite[Section~4]{upadhyay2019}. While the MM idea is indeed simpler than MLE, the associated estimation process continues to suffer from similar instability and computational issues like the ones discussed above.

We emphasise that none of our estimators suffer from any of the issues mentioned above. In particular, both our estimators are online and have a significantly simple update rule;  thus, improving the estimate whenever a new data point arrives is extremely easy. Also, both our estimators are stable, i.e., the estimated values will almost surely be finite. More importantly,  the performance of our estimators is comparable to that of MLE. This can be seen from the numerical experiments in Section~\ref{sec5}.

\begin{figure}
\centering
\subfigure[$\Delta_1 = 5,\quad p_1 = 3$] {\includegraphics[scale=0.36]{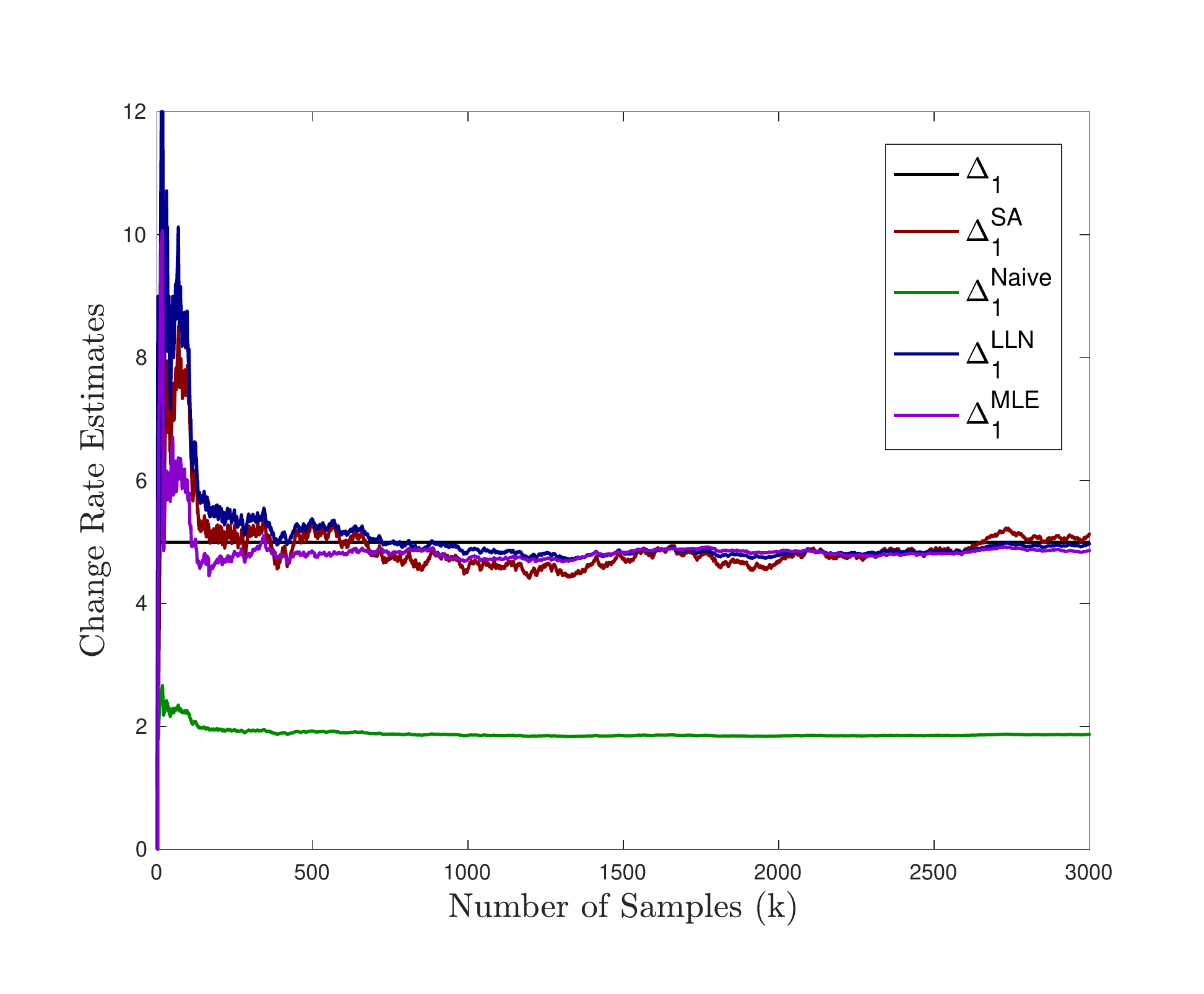}\label{page2}}\hspace{-2em}
\subfigure[$95\%$ Confidence interval] {\includegraphics[scale=0.36]{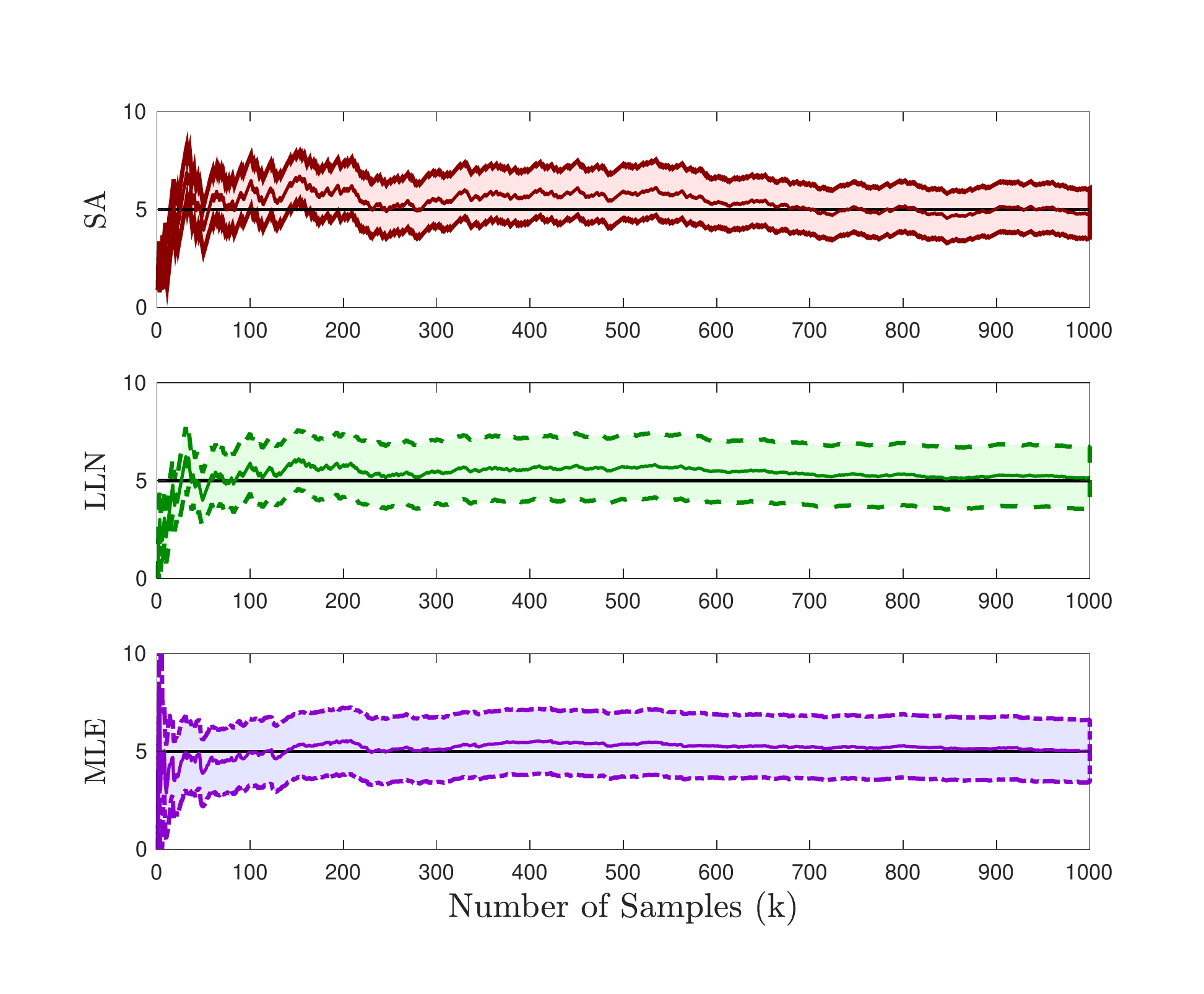}\label{CI}}
\caption{Comparison between Different Estimators.}
\label{fig: change_rates_comparison}
\end{figure}

\section{Numerical Results}
\label{sec5}
In this section, we provide three simulations to help evaluate the strength of our estimators. In the first experiment, we look at how well our estimation ideas perform in comparison to the Naive and the MLE estimator. In the second experiment, we substitute the change rate estimates obtained via the above approaches into the algorithm given in \cite{Azar2018} and compute the optimal crawling rates. To judge the quality of the crawling policy so obtained, we also look at the associated average freshness as defined in \eqref{Fresh}. Finally, in the third experiment, we compare the performance of our two estimators for different choices of $\{\alpha_k\}$ and $\{\eta_k\},$ respectively.


\subsection*{Expt. 1: Comparison of Estimation Quality}
Here, we compare four different page rate estimators: LLN, SA, Naive, and MLE. Their performances can be seen in Fig~\ref{fig: change_rates_comparison}. We now describe what is happening in the two figures there. Unless specified, the notations are as in Section~\ref{sec2}.

In Fig.~\ref{page2}, we work with exactly one page. We suppose that the times at which this page changes is a homogeneous Poisson point process with rate $\Delta_1 = 5.$ Separately, we set the crawling frequency arbitrarily to be $p_1 = 3.$ This implies that the times at which we crawl this page is another Poisson point process with rate $p_1 = 3.$

Using the above parameters, we now generate the random time instances at which this page changes. Alongside, we also sample the time instances at which this page is crawled. We then check if the page has changed or not between two successive page accesses. This generates the values of indicator sequence $\{I_k\}.$

We now give $\{I_k\},$ $\{\tau_k\},$ and $p_i$ as input to the four different estimators mentioned above and analyse their performances. The trajectory shown in Fig.~\ref{page2} corresponds to  exactly one run of each estimator. Note that the trajectory of the estimates obtained by the SA estimator is labelled $\Delta_1^{SA},$ etc. For the SA estimator, we had set $\eta_k = (k + 1)^{-\gamma}$ with $\gamma = 0.75.$ On the other hand, for our LLN estimator, we had set $\alpha_k \equiv 1.$

In Fig.~\ref{CI}, the parameter values are exactly in Fig.~\ref{page2}. However, we now run the simulation $1000$ times; the page change times and the page access times are generated afresh in each run. We then look at the $95\%$ confidence interval of the obtained estimates.

We now summarise our findings. Clearly, in each case, we can observe that performances of the MLE, LLN, and the SA estimators are comparable to each other and all of them outperform the Naive estimator. This last observation is not surprising since the Naive estimator completely ignores the missing changes between two crawling instances. However, the fact that the estimates from our approaches are close to that of the MLE estimator---both in terms of mean and variance---was indeed surprising to us. This is because, unlike MLE, our estimators completely ignore the actual lengths of the intervals between two accesses. Instead, they use $p_i,$ which only accounts for the mean interval length. %

While the plots do not show this, we once again draw attention to the fact that the time taken by each iteration in MLE rapidly grows as $k$ increases. However, our estimators take roughly the same amount of time for each iteration. 

\begin{figure}
\centering
\subfigure[Optimal crawling rate for Page $2$] {\includegraphics[scale=0.36]{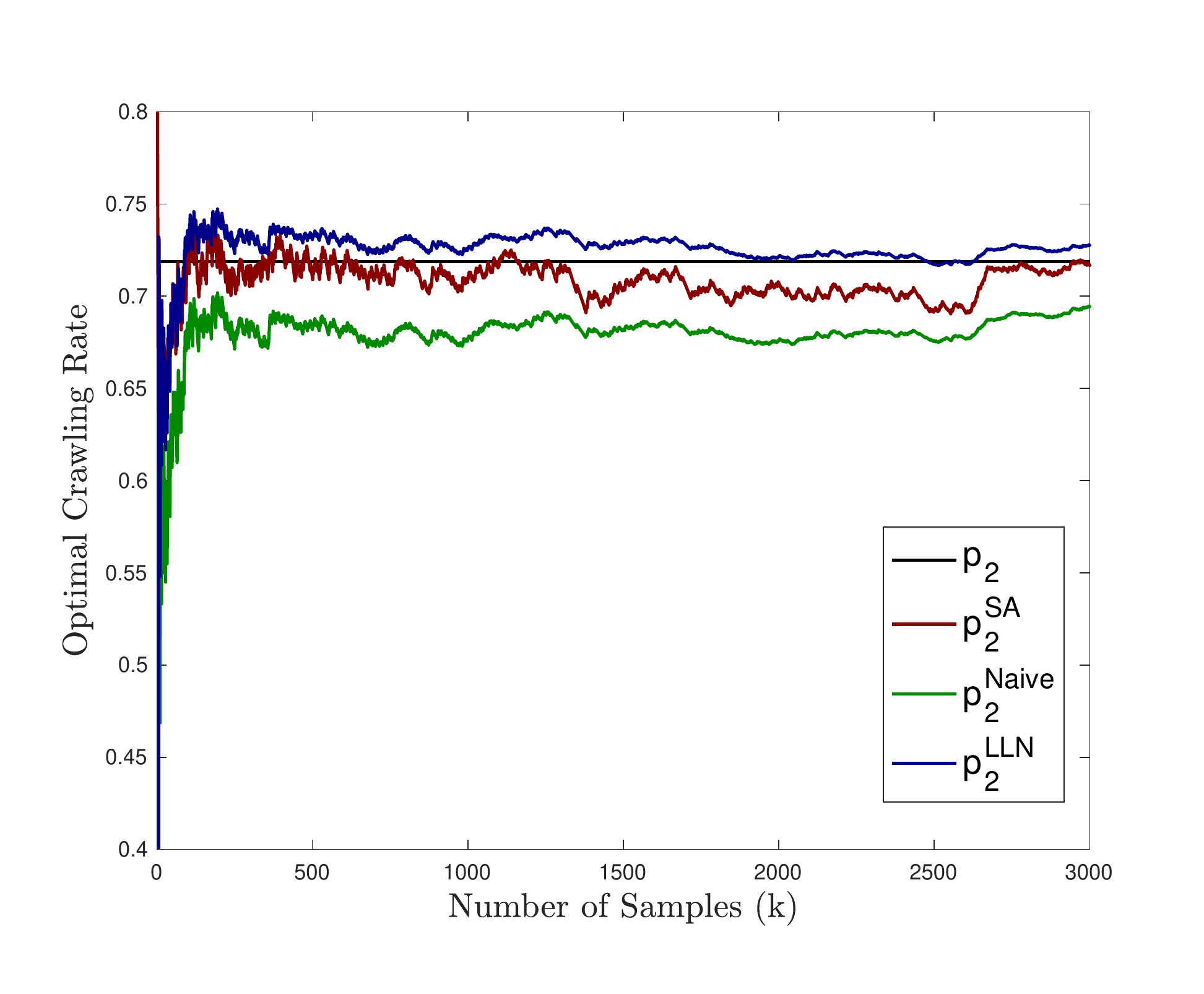}\label{FP}}\hspace{-2em}
\subfigure[Average Freshness] {\includegraphics[scale=0.36]{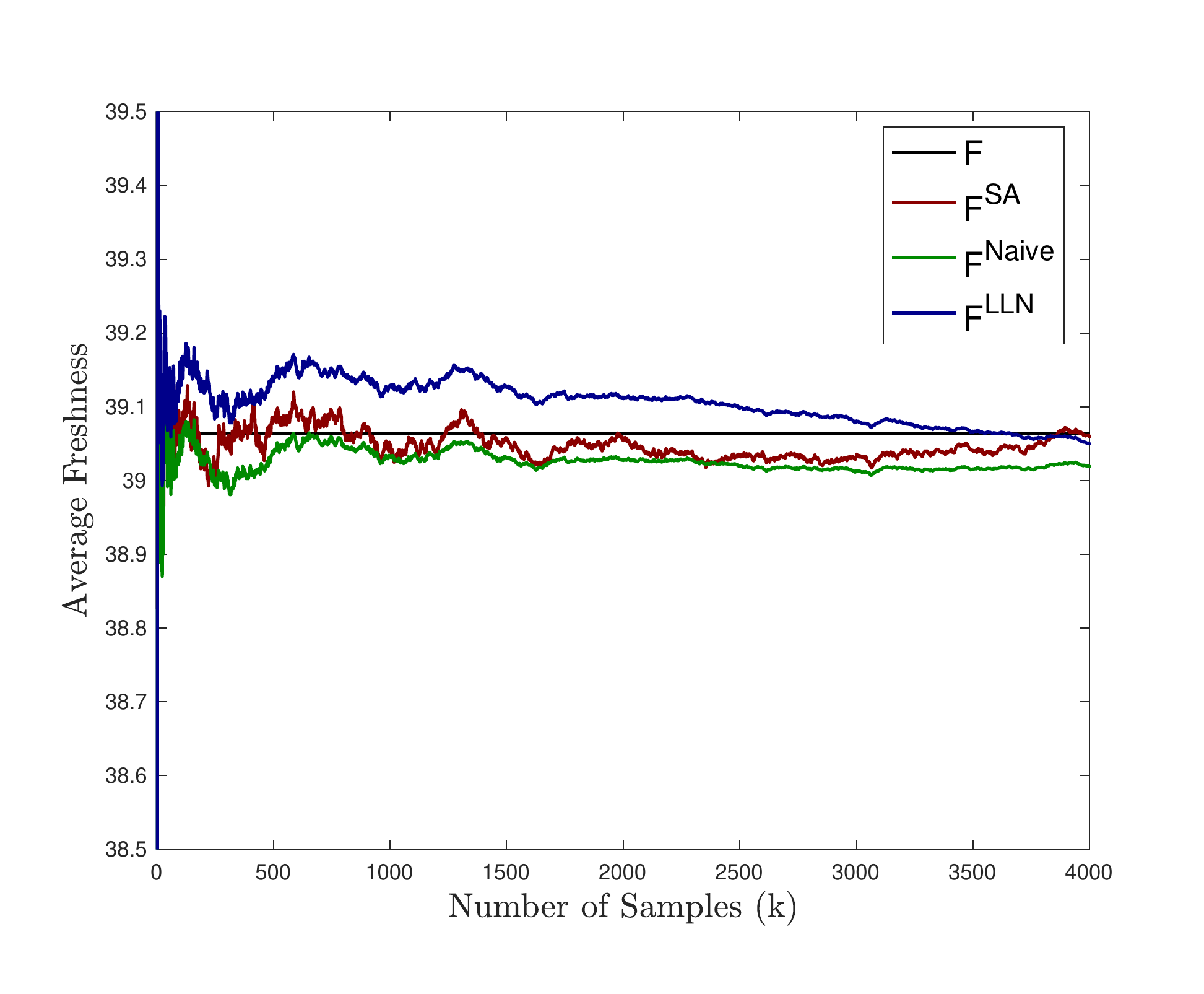}\label{TF}}
\caption{Optimal Crawling Rates and Freshness}
\label{figii}
\end{figure}

\subsection*{Expt. 2: Optimal Crawling rates and Freshness}

In this experiment, we consider $N = 100$ pages together. The  $\{\Delta_i\}$ sequence---the mean change rates for different pages---is obtained by sampling independently from the uniform distribution on $[0, 1],$ i.e., $\Delta_i \sim U[0,\, 1]$. We further assume that the bound on the overall bandwidth is $B = 80.$ The initial crawling frequencies for different pages are set by breaking up $B$ evenly across all pages, i.e., $p_i = B/N = 0.8$ for all $i.$ Because the $p_i$ values are arbitrarily chosen, these are not the optimal crawling rates. We then independently generate the change and access times for each page as in Expt. 1. Subsequently, we estimate the unknown change rate for each page individually.

For each $k,$ we then substitute the change rate estimates given by the different estimators into \cite[Algorithm 2]{Azar2018} and obtain the associated optimal crawling rates. In the same way, we substitute the actual $\Delta_i$ values there and obtain the true optimal crawling rates. Fig.~\ref{FP} provides a comparison between these values for a single page.  We can see that the estimate of the optimal crawling rate obtained from our approaches is much better than that of the Naive estimator.

To check how good our estimate of the true optimal crawling policy is, we look at the associated average freshness given by\footnote{In \cite{Azar2018}, it was shown that maximising \eqref{Fresh} under a bandwidth constraint for large enough $T$ corresponds to maximising \eqref{OptFresh} under the same bandwidth constraint.}
\begin{equation}\label{OptFresh}
    F(p) = \sum_{i=0}^{N}\dfrac{w_i p_i}{p_i+\Delta_i}
\end{equation}
and compare the same to that of the true optimal crawling policy. This comparison is given  in Fig.~\ref{TF}. Somewhat surprisingly, the average freshness does not vary much for all the three estimators. However, eventually, the average freshness captured by our estimators becomes much closer to the true optimal average freshness.


\begin{figure*}
\centering
\subfigure[LLN estimator for different $\{\alpha_k\}$ choices] {\includegraphics[scale=0.36]{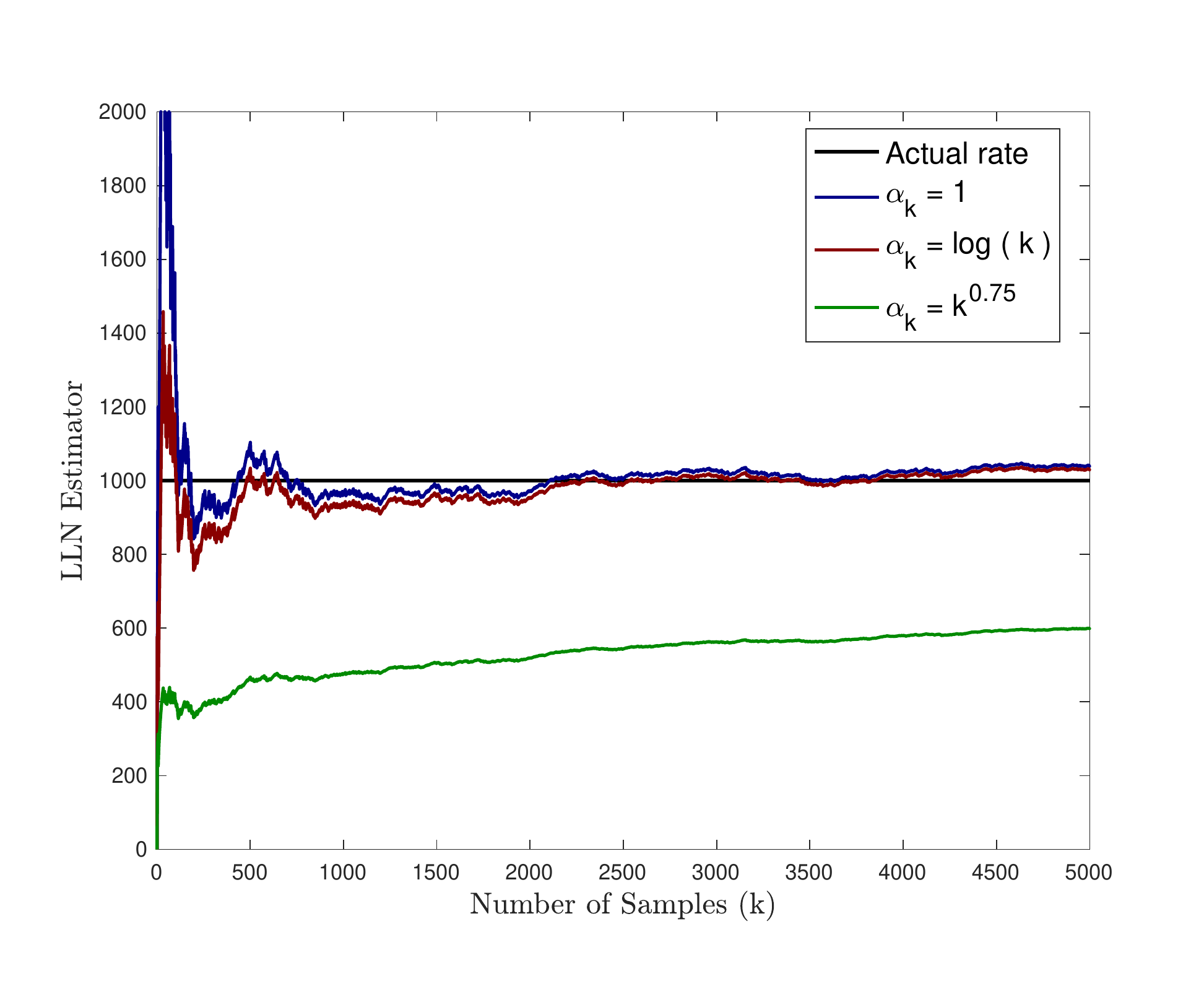}\label{Fig6}}\hspace{-2em}
\subfigure[SA estimator with $\eta_k = (k+1)^{-\gamma} $for different $\gamma$ choices] {\includegraphics[scale=0.36]{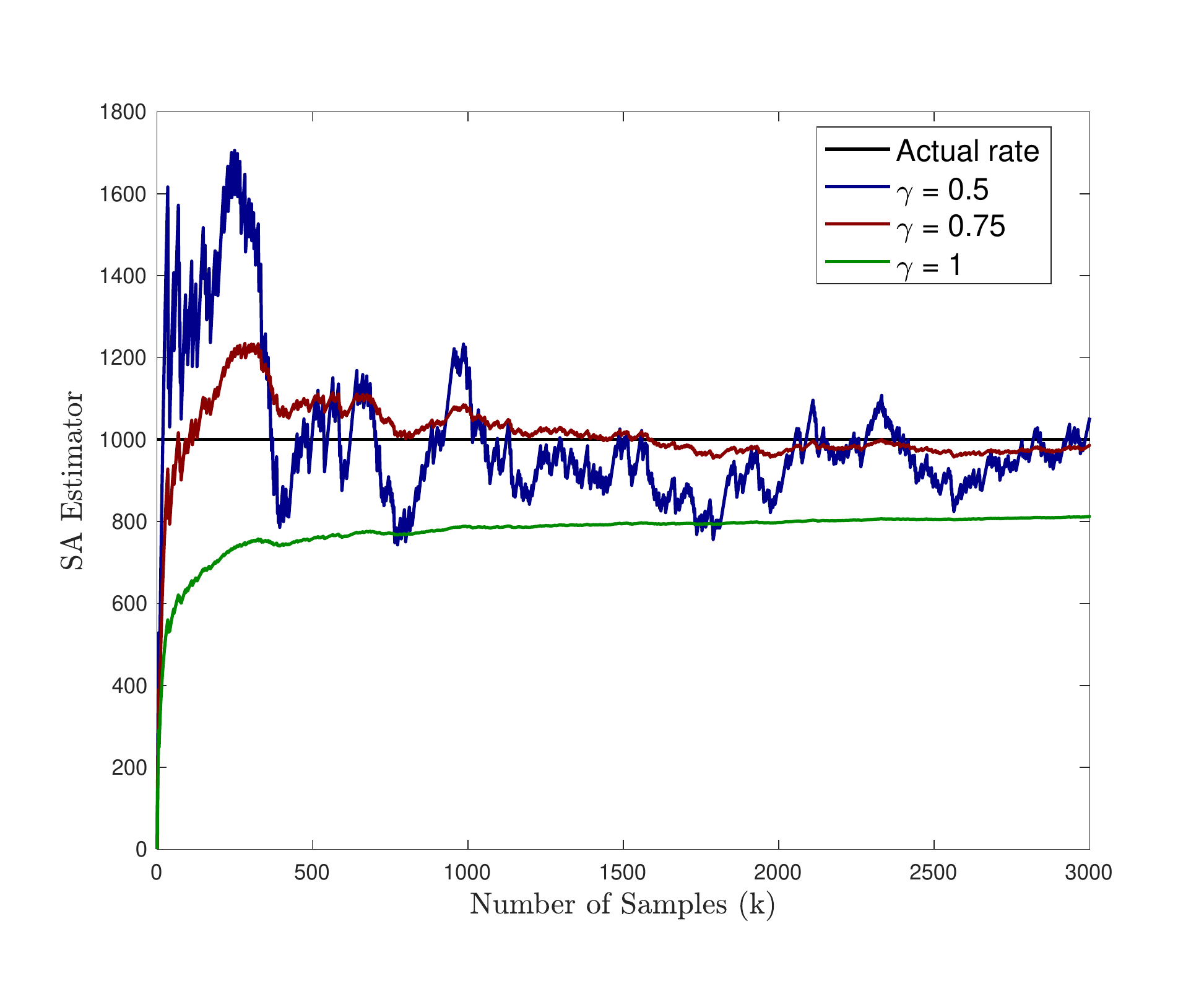}\label{Fig7}}
\caption{Impact of $\{\alpha_k\}$ and $\{\eta_k\}$ choices on Performance.}
\label{figiii}
\end{figure*}

\subsection*{Expt. 3: Impact of $\{\alpha_k\}$ and $\{\eta_k\}$ choices}
The theoretical results presented in Section~\ref{sec3} showed that the convergence rate of our estimators is affected by the choice of $\{\alpha_k\}$ and $\{\eta_k\},$ respectively. Figures~\ref{Fig6} and \ref{Fig7} provide a numerical verification of the same.

The details are as follows. Here, again, we restrict our attention to one single page. For Fig.~\ref{Fig6}, we chose $\Delta = 1000$ and $p = 200.$ Notice that the page change rate is very high, whereas the crawling frequency is relatively a low value. We then used the LLN estimator with three different choices of $\{\alpha_k\};$ these choices are shown in the figure itself. The LLN estimator with $\alpha_k = k^{0.75}$ has the worst performance. This behaviour matches the prediction made by Theorem~\ref{thm:LLN_Est}.

In Fig.~\ref{Fig7}, we again consider the same setup as above. However, this time we run the SA estimator with three different choices of $\{\eta_k\};$ the choices are given in the figure itself. We see that the performance for $\gamma = 0.75$ is better than the $\gamma = 0.5$ case. This is as predicted in Theorem~\ref{thm:SA_Est}. However, it worsens for the $\gamma = 1$ case. Notice that the latter case is not covered by Theorem~\ref{thm:SA_Est}.

\section{Conclusion and Future work} \label{sec6}
We proposed two new online approaches for estimating the rate of change of web pages. Both these estimators are computationally efficient in comparison to the MLE estimator. We first provide theoretical analysis on the convergence of our estimators and then provide numerical simulations to compare their performance with the existing estimators in the literature.  From numerical experiments, we have verified that the proposed estimators perform significantly better than the Naive estimator and have extremely simple update rules which make them computationally attractive.

The performance of both our estimators  currently depend on the choice of $\{\alpha_k\}$ and $\{\eta_k\},$ respectively. One aspect to analyse in the future would be to ask what would be the ideal choice for these sequences that would help attain the fastest convergence rate. Another interesting research direction is to combine the online estimation with dynamic optimisation.

\section*{Acknowledgement}
This work is partly supported by ANSWER project PIA FSN2 (P15 9564-266178 \textbackslash  DOS0060094) and DST-Inria  project  "Machine Learning for Network Analytics" IFC/DST-Inria-2016-01/448.

\bibliographystyle{acm}
\bibliography{acmart}
\end{document}